\newtheorem{theorem}{Theorem}
\newtheorem{lemma}{Lemma}
\newtheorem{assumption}{Assumption}
\newtheorem{proposition}{Proposition}
\newtheorem{definition}{Definition}
\newtheorem{problem}{Problem}
\newcommand{\Null}{\mathrm{Null}}
\newcommand{\one}{\mathbf{1}}
\newcommand{\rank}{\mathrm{rank}}
\newcommand{\myspan}{\mathrm{span}}
\newcommand{\D}{\mathrm{d}}
\newcommand{\sgn}{\mathrm{sgn}}
\newcommand{\R}{\mathbb{R}}
\newcommand{\G}{\mathcal{G}}
\newcommand{\E}{\mathcal{E}}
\newcommand{\V}{\mathcal{V}}
\newcommand{\N}{\mathcal{N}}
\renewcommand{\L}{\mathcal{L}}
\newcommand{\ki}{k_{\text{\scriptsize{I}}}}
\newcommand{\kp}{k_{\text{\scriptsize{P}}}}
\begin{document}

\title{Bearing-Based Formation Maneuvering}
\author{Shiyu Zhao and Daniel Zelazo
\thanks{S. Zhao and D. Zelazo are with the Faculty of Aerospace Engineering, Technion - Israel Institute of Technology, Haifa, Israel.
    {\tt\small szhao@tx.technion.ac.il, dzelazo@technion.ac.il}}
}
\IEEEoverridecommandlockouts
\maketitle
\begin{abstract}
This paper studies the problem of multi-agent formation maneuver control where both of the centroid and scale of a formation are required to track given velocity references while maintaining the formation shape.
Unlike the conventional approaches where the target formation is defined by inter-neighbor relative positions or distances, we propose a bearing-based approach where the target formation is defined by inter-neighbor bearings.
Due to the invariance of the bearings, the bearing-based approach provides a natural solution to formation scale control.
We assume the dynamics of each agent as a single integrator and propose a globally stable proportional-integral formation maneuver control law.
It is shown that at least two leaders are required to collaborate in order to control the centroid and scale of the formation whereas the followers are not required to have access to any global information, such as the velocities of the leaders.
\end{abstract}
\overrideIEEEmargins

\section{Introduction}

In the conventional approaches to distributed multi-agent formation control, the target formation is defined by either inter-neighbor {relative position} or {distance} constraints.
In recent years, there has been a growing research interest in the bearing-based approach where the target formation is constrained by inter-neighbor \emph{bearings}.
The existing works on bearing-based formation control focused mainly on distributed stabilization of \emph{static} target formations using bearing-only \cite{bishop2010SCL,Eren2012IJC,Franchi2012IJRR,Cornejo2013IJRR,zhao2013SCLDistribued,Eric2014ACC,zhao2014TACBearing} or relative position measurements \cite{nima2009TR,bishopconf2011rigid,Antonio2012CDC,zhao2015ECC}.
In this paper we focus on distributed tracking control of \emph{maneuvering} target formations whose centroid and scale track given velocity references while the formation shape is maintained as desired.

In our previous work \cite{zhao2015ECC}, we proposed a linear bearing-based control law to stabilize static target formations in arbitrary dimensions.
When applied to the maneuvering case where the leaders have nonzero constant velocities, the control law proposed in \cite{zhao2015ECC} would result in constant tracking errors.
In order to eliminate the tracking errors, in this paper we adopt a proportional-integral (PI) formation control scheme, which has also been applied to solve the problem of distance-based formation maneuver control \cite{oshri2015ECC,oshri2014IACAS}.
The basic idea of the PI control scheme is to treat the constant leader velocities as an input disturbance and the impact of the disturbance can be eliminated by the integral action.
By following the PI control scheme, the stabilization control law in \cite{zhao2015ECC} can be viewed as a proportional control and an integral term of the proportional control can be added to obtain an effective formation maneuver control law.
In this paper, we show that there must exist at least two leaders and the leaders must collaborate to control the collective motion of the entire formation.

One advantage of the proposed bearing-based formation control law is that it provides a natural solution to formation scale control.
Formation scale control is a practically useful technique.
By adjusting the scale of the formation, the agents are able to dynamically respond to the environments including, for example, avoiding obstacles.
The relative position and distance based approaches have been applied to solve the formation scale control problem \cite{Coogan2012Scale,Park2014IJRNC}.
But since the inter-neighbor relative positions or distances are \emph{not} invariant to the formation scale, the two approaches lead to complicated estimation and control scheme \cite{Coogan2012Scale,Park2014IJRNC}.
As a comparison, the bearing-based approach provides a simple solution to formation scale control due to the invariance of the inter-neighbor bearings to the formation scale.
As shown in this paper, by assigning appropriate velocities to the leaders, the scale of the formation can be adjusted continuously while the formation shape can be maintained as desired.

\section{Notations and Preliminaries}\label{section_preliminary}

In this section, we present the notations and preliminaries to the bearing rigidity theory that will be used throughout the paper.

\subsection{Notations}

For any nonzero vector $x\in\R^d$ ($d\ge2$), define the orthogonal projection operator $P: \R^d\rightarrow\R^{d\times d}$ as
\begin{align*}
    P(x) \triangleq I_d - \frac{x}{\|x\|}\frac{x^T }{\|x\|}.
\end{align*}
For notational simplicity, we denote $P_x \triangleq P(x)$.
Note that $P_x$ is an orthogonal projection matrix that geometrically projects any vector onto the orthogonal compliment of $x$.
It can be easily seen that $P_x$ is positive semi-definite and satisfies $P_x^T =P_x$, $P_x^2=P_x$, and $\Null(P_x)=\myspan\{x\}$.

Denote $I_n\in\R^{n\times n}$ as the identity matrix, and $\one_n\triangleq[1,\dots,1]^T\in\R^n$.
Let $\Null(\cdot)$ be the null space of a matrix.
Let $\|\cdot\|$ be the Euclidian norm of a vector or the spectral norm of a matrix, and $\otimes$ be the Kronecker product.

\subsection{Preliminaries to Bearing Rigidity Theory}

Bearing rigidity theory provides a fundamental tool for the analysis of bearing-based formation control problems.
The results presented in the following can be found in \cite{zhao2014TACBearing}.

Consider a set of points $\{p_i\}_{i=1}^n$ in $\R^d$ ($n\ge2$ and $d\ge2$) and assume no two points are collocated.
Denote $p=[p_1^T ,\dots, p_n^T]^T \in\mathbb{R}^{dn}$.
Let $\G=(\V,\E)$ be an undirected graph that consists of a vertex set $\mathcal{V}$ and an edge set $\mathcal{E}\subseteq \mathcal{V} \times \mathcal{V}$.
Let $n=|\V|$ and $m=|\E|$.
The set of neighbors of vertex $i$ is denoted as $\mathcal{N}_i\triangleq\{j \in \mathcal{V}: \ (i,j)\in \mathcal{E}\}$.
A \emph{formation}, denoted as $\G(p)$, is a graph $\G$ with vertex $i$ in the graph mapped to the point $p_i$ for all $i\in\V$.

For a formation $\G(p)$, define the \emph{edge vector} and the \emph{bearing}, respectively, as
\begin{align*}
e_{ij}\triangleq p_j-p_i, \quad g_{ij}\triangleq e_{ij}/\|e_{ij}\|, \quad \forall(i,j)\in\E.
\end{align*}
The bearing $g_{ij}$ is a unit vector.
Note $e_{ij}=-e_{ji}$ and $g_{ij}=-g_{ji}$.
Assign a direction to each edge in $\G$ and express the edge vector and the bearing for the $k$th directed edge in the oriented graph as $e_{k}\triangleq p_j-p_i$, $g_{k}\triangleq {e_{k}}/{\|e_{k}\|}$, $\forall k\in\{1,\dots,m\}$, respectively.
Define the \emph{bearing function} $F_B: \R^{dn}\rightarrow\R^{dm}$ as
$$F_B(p)\triangleq [g_1^T ,\dots, g_m^T]^T.$$
The bearing function describes all the bearings in the formation.
The \emph{bearing rigidity matrix} is defined as the Jacobian of the bearing function,
\begin{align*}
    R_B(p) \triangleq \frac{\partial F_B(p)}{\partial p}\in\R^{dm\times dn}.
\end{align*}
Let $\delta p$ be a variation of $p$.
If $R_B(p)\delta p=0$, then $\delta p$ is called an \emph{infinitesimal bearing motion} of $\G(p)$.
A formation always has two kinds of \emph{trivial} infinitesimal bearing motions: translation and scaling of the entire formation.
The next definition is one of the most important concepts in bearing rigidity theory.

\begin{definition}[{Infinitesimal Bearing Rigidity}]\label{definition_infinitesimalParallelRigid}
    A formation is \emph{infinitesimally bearing rigid} if all the infinitesimal bearing motions of the formation are trivial.
\end{definition}

The following are necessary and sufficient conditions for infinitesimal bearing rigidity.

\begin{theorem}[\cite{zhao2014TACBearing}]\label{theorem_IBR_NSCondition}
    For any formation $\G(p)$, the following statements are equivalent:
    \begin{enumerate}[(a)]
    \item $\G(p)$ is {infinitesimally bearing rigid};
    \item $\G(p)$ can be uniquely determined up to a translation and a scaling factor by the inter-neighbor bearings;
    \item $\rank(R_B)=dn-d-1$;
    \item $\Null(R_B)=\myspan\{\one_n\otimes I_d, p\}$.
    \end{enumerate}
\end{theorem}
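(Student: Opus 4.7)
The plan is to separate the claimed equivalences into two parts: the chain $(a) \Leftrightarrow (d) \Leftrightarrow (c)$ reduces to elementary linear algebra plus a short computation identifying the trivial infinitesimal motions, while $(a) \Leftrightarrow (b)$ is the substantive step and rests on the crucial observation that fixing all inter-neighbor bearings imposes a \emph{linear} constraint on the configuration.

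For the easier part, I would first verify that both generators of $\myspan\{\one_n \otimes I_d,\, p\}$ lie in $\Null(R_B)$: a translation $\delta p = \one_n \otimes v$ gives $\delta p_j - \delta p_i = 0$ for every edge, while a scaling $\delta p = \alpha p$ gives $\delta p_j - \delta p_i = \alpha(p_j - p_i)$, which is annihilated by the projector $P_{g_{ij}}$ appearing in each block row of $R_B$. Since no two points coincide, $p \notin \myspan\{\one_n \otimes I_d\}$ and this trivial subspace has dimension $d+1$. By Definition~\ref{definition_infinitesimalParallelRigid}, (a) is exactly the reverse inclusion $\Null(R_B) \subseteq \myspan\{\one_n \otimes I_d,\, p\}$, so (a)~$\Leftrightarrow$~(d); and (c)~$\Leftrightarrow$~(d) is rank-nullity.

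The main obstacle is $(a) \Leftrightarrow (b)$. The key observation, which I would isolate as a short preliminary calculation, is that the $k$-th block of $R_B(p)(q-p)$ equals $(P_{g_{ij}}/\|e_{ij}\|)\bigl((q_j - q_i) - (p_j - p_i)\bigr)$; since $P_{g_{ij}}(p_j - p_i) = 0$, this reduces to $(P_{g_{ij}}/\|e_{ij}\|)(q_j - q_i)$, so $R_B(p)(q-p) = 0$ is equivalent to $q_j - q_i$ being parallel to $p_j - p_i$ for every edge. In particular, whenever a configuration $q$ has the same inter-neighbor bearings as $p$, the vector $q-p$ lies in $\Null(R_B(p))$ \emph{globally}, not merely to first order --- this is the feature that distinguishes bearing rigidity from distance rigidity, where the linearization captures only infinitesimal information. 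Granted this, $(a) \Rightarrow (b)$ is immediate, since $q - p = \one_n \otimes v + \alpha p$ forces $q$ to be a scaled translate of $p$; for the contrapositive of $(b) \Rightarrow (a)$, pick any $\delta p \in \Null(R_B) \setminus \myspan\{\one_n \otimes I_d,\, p\}$, write $\delta p_j - \delta p_i = c_{ij}(p_j - p_i)$, and set $q(t) = p + t\delta p$. For sufficiently small $t > 0$ every coefficient $1 + t c_{ij}$ is positive, so $q(t)$ has exactly the same bearings as $p$ but is not a scaled translate, contradicting (b). The only delicate point is controlling the signs of these coefficients to rule out bearing sign-flips, which is routine for $t$ sufficiently small.
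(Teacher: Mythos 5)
Your proof is correct, but note that the paper itself offers no proof of this theorem to compare against: it is imported verbatim from the cited reference \cite{zhao2014TACBearing}. On its own merits your argument is sound and self-contained. The easy chain (a)$\Leftrightarrow$(d)$\Leftrightarrow$(c) is handled exactly as one would expect (the only point worth stating explicitly is that $\dim\myspan\{\one_n\otimes I_d,p\}=d+1$ because $n\ge2$ and no two points are collocated, which you do note), and you correctly identify the crux of (a)$\Leftrightarrow$(b): since the $k$th block row of $R_B(p)$ applied to $q-p$ is $\frac{P_{g_{ij}}}{\|e_{ij}\|}\bigl((q_j-q_i)-(p_j-p_i)\bigr)=\frac{P_{g_{ij}}}{\|e_{ij}\|}(q_j-q_i)$, membership of $q-p$ in $\Null(R_B(p))$ is a \emph{global}, not merely first-order, characterization of bearing-equivalent configurations. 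This is precisely the mechanism used in the original reference, so your route matches the standard one. Your handling of the converse direction via $q(t)=p+t\delta p$ with $1+tc_{ij}>0$ for small $t$ is the right way to rule out bearing sign-flips; for completeness one should also remark that small $t$ preserves non-collocation of all (not just adjacent) point pairs so that $q(t)$ remains a valid formation, but this is the same continuity argument.
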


\section{Problem Statement of Bearing-Based Formation Maneuvering}
Consider a formation of $n$ agents in $\R^d$ ($n\ge2$, $d\ge2$).
Let $p_i\in\R^d$ and $v_i \in \R^d$ be the position and velocity of agent $i \in \V=\{1,\dots,n\}$.
Suppose the velocity of the first $n_\ell$ agents are given.
Then the first $n_\ell$ agents are called \emph{leaders} and the rest $n_f$ agents are called \emph{followers}.
Note $0\le n_\ell \le n$ and $n_\ell+n_f=n$.
Let $\V_\ell=\{1,\dots,n_\ell\}$ and $\V_f=\{n_\ell+1,\dots,n\}$ be the index sets of the leaders and followers, respectively.
The dynamics of the leaders and followers are given by
\begin{align*}
\dot{p}_i(t)&=v^*_i(t), \quad  i\in\V_\ell,\\
\dot{p}_i(t)&=v_i(t), \quad i\in\V_f,
\end{align*}
where $\{v_i^*(t)\}_{i\in\V_\ell}$ are the velocity references assigned to the leaders, and $\{v_i(t)\}_{i\in\V_f}$ are the control inputs to be designed for the followers.
Denote $p_\ell=[p_1^T ,\dots, p_{n_\ell}^T ]^T $, $p_f=[p_{n_\ell+1}^T  ,\dots, p_{n}^T ]^T $, and $p=[p_\ell^T, p_f^T]^T $.
Furthermore, denote $v_\ell^*=[(v_1^*)^T  ,\dots, (v_{n_\ell}^*)^T ]^T $, $v_f=[v_{n_\ell+1}^T  ,\dots, v_n^T]^T $, and $v=[(v_\ell^*)^T , v_f^T]^T $.

Suppose the underlying information flow among the agents can be described by a fixed and undirected graph $\G=(\V,\E)$.
If $(i,j)\in\E$, then agents $i$ and $j$ can obtain the relative positions of each other.
Suppose a desired formation shape is defined by the constant inter-neighbor bearing constraints $\{g_{ij}^*\}_{(i,j)\in\E}$.
The problem of bearing-based formation maneuvering is stated as below.

\begin{problem}[{Bearing-Based Formation Maneuvering}]\label{problem_bearingFormationManeuver}
Consider a formation $\G(p)$ where the velocities of the leaders are assigned as $\{v_i^*(t)\}_{i\in\V_\ell}$.  Design the control input for each follower $v_i(t)$ ($i\in\V_f$) based on the relative position measurements $\{p_i(t)-p_j(t)\}_{j\in\N_i}$ such that the inter-neighbor bearings converge to the desired values, i.e., $g_{ij}(t)\rightarrow g_{ij}^*$ for all $(i,j)\in\E$ as $t\rightarrow \infty$.
\end{problem}

When the velocity of each leader is zero, Problem~\ref{problem_bearingFormationManeuver} would become the bearing-based formation stabilization problem studied in \cite{zhao2015ECC}.

In Problem~\ref{problem_bearingFormationManeuver} the formation is required to converge to a target formation that is jointly determined by the inter-neighbor bearings and the leaders.
This target formation is formally defined below.
\begin{definition}[Target Formation]\label{definition_targetFormation}
Let $\G(p^*(t))$ be the \emph{target formation} that satisfies the following constraints for all $t>0$:
\begin{enumerate}[(a)]
\item Bearing: $(p_j^*(t)-p_i^*(t))/\|p_j^*(t)-p_i^*(t)\|=g_{ij}^*, \forall (i,j)\in\E$,
\item Leader: $p_i^*(t)=p_i(t), \forall i\in\V_\ell$.
\end{enumerate}
\end{definition}

Since the leaders have nonzero velocities, the centroid and scale of the target formation are \emph{time-varying}.
The shape of the target formation is, however, \emph{fixed} since the bearing constraints are constant.
Based on the notion of the target formation $\G(p^*(t))$, Problem~\ref{problem_bearingFormationManeuver} can be equivalently stated as designing a control law such that the formation $p(t)$ can converge to the target formation $p^*(t)$, i.e., $p_i(t)\rightarrow p_i^*(t)$ for all $i\in\V_f$ as $t\rightarrow \infty$.

One key problem that follows the definition of the target formation is whether the target formation exists and is unique.
In fact, this problem is equivalent to the localizability problem in bearing-only network localization \cite{zhao2015NetLocalization}.
As shown in \cite{zhao2015NetLocalization}, a formation can be uniquely determined by the bearings and the leaders if and only if the formation is \emph{localizable}.
A variety of conditions for localizability have been proposed in \cite{zhao2015NetLocalization}.
One useful sufficient condition is that a formation would be localizable if the formation is infinitesimally bearing rigid and has at least two leaders.
This sufficient condition is intuitively easy to understand.
Specifically, if the formation is infinitesimally bearing rigid, then it can be uniquely determined up to a translation and a scaling factor by the bearings.
The translation and scaling ambiguity can be further eliminated by the introduction of at least two leaders.
Then the formation can be fully and uniquely determined.
Due to space limitations, we omit the details and simply make the following assumption.

\begin{assumption}\label{assumption_targetFormationUniqueExist}
The target formation $\G(p^*(t))$ is infinitesimally bearing rigid and has at least two leaders.
\end{assumption}

The mathematical condition implied by Assumption~\ref{assumption_targetFormationUniqueExist} will be given later (see Lemma~\ref{lemma_LffNonsingular}).
It is worth noting that the infinitesimal bearing rigidity is merely sufficient but not necessary to ensure the existence and uniqueness of the target formation.
For both necessary and sufficient conditions, please see \cite{zhao2015NetLocalization}.

\section{Proposed Control Law \\and Convergence Analysis}

In this section we first propose a distributed PI control law to solve Problem~\ref{problem_bearingFormationManeuver} and then prove the global formation stability under the control law.

\subsection{A Distributed PI Control Law}

For each follower, the proposed control law is
\begin{align}\label{eq_controlLaw_maneuver_element_PI}
    \dot{p}_i(t)
    &=-\kp \underbrace{\sum_{j\in\N_i} P_{g_{ij}^*} (p_i(t)-p_j(t))}_{\text{proportional}} \nonumber\\
    &\quad -\ki \underbrace{\int_0^t\sum_{j\in\N_i} P_{g_{ij}^*} (p_i(\tau)-p_j(\tau))\D \tau}_{\text{integral}}, \quad i\in\V_f,
\end{align}
where $P_{g_{ij}^*}=I_d-g_{ij}^*(g_{ij}^*)^T $, and $\kp$ and $\ki$ are positive constant control gains.
Control law \eqref{eq_controlLaw_maneuver_element_PI} is distributed because the control of each agent only requires the relative positions of its neighbors.
The control law consists of a proportional term and an integral term.
When $\ki=0$, the control law would become the one proposed in \cite{zhao2015ECC}.
If the velocities of the leaders are zero, the proportional control alone is able to stabilize the target formation.
But if the velocities of the leaders are nonzero, the integral control is required to eliminate the steady state tracking error.

By defining a new state for the integral term, control law \eqref{eq_controlLaw_maneuver_element_PI} can be rewritten as
\begin{align}\label{eq_controlLaw_maneuver_element}
    \dot{p}_i(t)&=-\kp \sum_{j\in\N_i} P_{g_{ij}^*} (p_i(t)-p_j(t)) -\ki \xi_i(t), \nonumber \\
    \dot{\xi}_i(t)&= \sum_{j\in\N_i} P_{g_{ij}^*} (p_i(t)-p_j(t)), \quad i\in\V_f.
\end{align}

We next derive the matrix expression of control law \eqref{eq_controlLaw_maneuver_element}, which will be useful for the convergence analysis.
Let $\L\in\R^{dn\times dn}$ be a matrix with the $ij$th block submatrix as
\begin{align*}
\left\{
  \begin{array}{ll}
      [\L]_{ij}=0_{d\times d}, & i\ne j, (i,j)\notin\E, \\
      {[\L]_{ij}}=-P_{g_{ij}^*}, & i\ne j, (i,j)\in\E, \\ 
      {[\L]_{ii}}=\sum_{k\in\N_i}P_{g_{ik}^*}, & i=j, i\in\V. \\
  \end{array}
\right.
\end{align*}
The matrix $\L$ can be interpreted as a matrix-weighted graph Laplacian.
We call $\L$ the \emph{bearing Laplacian} since it carries the information of both the underlying graph and the bearings of the formation.
Partition $\L$ into the following form
\begin{align*}
    \L=\left[
         \begin{array}{cc}
           \L_{\ell\ell} & \L_{\ell f} \\
           \L_{f\ell} & \L_{ff} \\
         \end{array}
       \right],
\end{align*}
where $\L_{\ell\ell}\in\R^{dn_\ell\times dn_\ell}$, $\L_{\ell f}=\L_{f\ell}^T \in\R^{dn_\ell\times d n_f}$, and $\L_{ff}\in\R^{dn_f \times dn_f}$.
Then, it is straightforward to see that the matrix expression of control law \eqref{eq_controlLaw_maneuver_element} is
\begin{align}\label{eq_controlLaw_maneuver_matrix}
    \dot{p}_f(t)&=-\kp\left(\L_{ff}p_f(t)+\L_{f\ell}p_\ell(t)\right)-\ki \xi(t), \nonumber\\
    \dot{\xi}(t)&=\L_{ff}p_f(t)+\L_{f\ell}p_\ell(t),
\end{align}
where $\xi(t)=[\xi_{n_\ell+1}(t)^T  ,\dots, \, \xi_n(t)^T ]^T \in\R^{dn_f}$.

\subsection{Convergence Analysis}

Define the tracking error for the followers as
\begin{align}\label{eq_error_follower}
\delta(t)\triangleq p_f(t)-p_f^*(t),
\end{align}
where $p_f(t)$ is the real position of the followers and $p_f^*(t)$ is the \emph{time-varying} expected position of the followers in the target formation.
The aim of the convergence analysis is to show that $\delta(t)$ converges to zero.
To that end, we need to first calculate $p_f^*(t)$.
The following two results on the bearing Laplacian are useful.

\begin{lemma}\label{lemma_LP=0}
Any formation $\G(p)$ that satisfies the bearing constraints $\{g_{ij}^*\}_{(i,j)\in\E}$ satisfies $\L p=0$ and $\L_{ff}p_f+\L_{f\ell}p_l=0$.
\end{lemma}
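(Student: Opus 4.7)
The plan is to compute $\L p$ block-row by block-row directly from the definition of the bearing Laplacian, and then read off the second identity by partitioning. The key geometric fact I will exploit is that $P_{g_{ij}^*}$ annihilates any vector parallel to $g_{ij}^*$, which is stated in the Notations subsection ($\Null(P_x)=\myspan\{x\}$).

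First I would write out the $i$th block row of $\L p$. Using the three cases in the definition of $[\L]_{ij}$, only the diagonal block and the neighbor blocks contribute, giving
\begin{align*}
[\L p]_i
= \Bigl(\sum_{k\in\N_i} P_{g_{ik}^*}\Bigr) p_i - \sum_{j\in\N_i} P_{g_{ij}^*}\, p_j
= \sum_{j\in\N_i} P_{g_{ij}^*}\bigl(p_i - p_j\bigr).
\end{align*}
Next I would use the hypothesis that $\G(p)$ satisfies the bearing constraints: for every $(i,j)\in\E$, $(p_j-p_i)/\|p_j-p_i\|=g_{ij}^*$, so $p_i-p_j$ is a scalar multiple of $g_{ij}^*$. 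Since $P_{g_{ij}^*} g_{ij}^* = 0$, each summand vanishes, hence $[\L p]_i = 0$ for every $i\in\V$, i.e., $\L p = 0$.

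Finally, partitioning $p$ into leader and follower components and applying the block form of $\L$ gives
\begin{align*}
0 = \L p = \begin{bmatrix}\L_{\ell\ell} & \L_{\ell f} \\ \L_{f\ell} & \L_{ff}\end{bmatrix}\begin{bmatrix}p_\ell \\ p_f\end{bmatrix},
\end{align*}
and reading the second block row yields $\L_{ff} p_f + \L_{f\ell} p_\ell = 0$. There is no real obstacle here; the lemma is essentially a direct verification, and the only care needed is in handling the diagonal block of $\L$, which one must rewrite as a sum over neighbors so that it pairs termwise with the off-diagonal contributions.
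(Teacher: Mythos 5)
Your proposal is correct and follows essentially the same route as the paper's proof: both compute the $i$th block of $\L p$ as $\sum_{j\in\N_i} P_{g_{ij}^*}(p_i-p_j)$, observe it vanishes because $p_i-p_j$ is parallel to $g_{ij}^*$ (so $P_{g_{ij}^*}$ annihilates it), and then read the second identity off the block partition of $\L p=0$. Your version is slightly more explicit in spelling out the annihilation argument via $\Null(P_x)=\myspan\{x\}$, but there is no substantive difference.
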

\begin{proof}
Note
\begin{align*}
\L p&=
\left[
         \begin{array}{cc}
           \L_{\ell\ell} & \L_{\ell f} \\
           \L_{f\ell} & \L_{ff} \\
         \end{array}
       \right]
       \left[
         \begin{array}{c}
           p_\ell \\
           p_f \\
         \end{array}
       \right]
=
       \left[
         \begin{array}{c}
           \vdots \\
           \sum_{j\in\N_i} P_{g_{ij}^*} (p_i-p_j)\\
           \vdots \\
         \end{array}
       \right].
\end{align*}
By the elementwise expression of $\L p$, it is obvious that $\L p=0$ if $\G(p)$ satisfies the bearing constraints, i.e, $(p_j-p_i)/\|p_j-p_i\|=g_{ij}^*$.
By the partitioned block matrix expression of $\L p$, it can be seen that $\L p=0$ implies $\L_{ff}p_f+\L_{f\ell}p_l=0$.
\end{proof}
\begin{lemma}[\cite{zhao2015NetLocalization}]\label{lemma_LffNonsingular}
The matrix $\L_{ff}$ is positive definite if Assumption~\ref{assumption_targetFormationUniqueExist} holds.
\end{lemma}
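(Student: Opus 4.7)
The plan is to exploit the symmetric positive semi-definite structure of the bearing Laplacian $\L$, combined with a precise characterization of its null space coming from infinitesimal bearing rigidity, and then use the two-leader condition to eliminate any nontrivial vector supported only on the follower block.

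First I would establish that $\L \succeq 0$ and $\Null(\L) = \myspan\{\one_n \otimes I_d, p^*\}$. This follows by writing $\L = R_B^T D R_B$ evaluated at the target configuration $p^*$, where $D = \blkdiag(\|e_k^*\|^2 I_d)$ is positive definite; the identity is a direct block-level computation using $\partial g_k/\partial p_i = \mp\|e_k^*\|^{-1} P_{g_k^*}$ at the endpoints of each edge, together with the projection identity $P_{g_k^*}^2 = P_{g_k^*}$. Because $D \succ 0$, one has $\Null(\L) = \Null(R_B(p^*))$, which under Assumption~\ref{assumption_targetFormationUniqueExist} equals $\myspan\{\one_n \otimes I_d, p^*\}$ by Theorem~\ref{theorem_IBR_NSCondition}(d).

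Next I would prove positive definiteness of $\L_{ff}$ by contradiction. Suppose some $x_f \ne 0$ satisfies $x_f^T \L_{ff} x_f = 0$, and embed it as $x = [0_{dn_\ell}^T, x_f^T]^T \in \R^{dn}$. Then $x^T \L x = x_f^T \L_{ff} x_f = 0$, and since $\L \succeq 0$ and symmetric, this forces $\L x = 0$, so $x \in \Null(\L)$. Hence $x = (\one_n \otimes I_d) a + p^* b$ for some $a \in \R^d$ and $b \in \R$. Reading off the leader block $x_\ell = 0$ gives $a + b\, p_i^* = 0$ for every $i \in \V_\ell$. Assumption~\ref{assumption_targetFormationUniqueExist} supplies at least two leaders, whose target positions are distinct (bearings between collocated points are undefined), so subtracting two such equations forces $b = 0$ and then $a = 0$. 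This yields $x = 0$, contradicting $x_f \ne 0$.

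I expect the main obstacle to be the block-level identification $\L = R_B^T D R_B$, which is not spelled out in the excerpt and requires careful bookkeeping of edge orientations and signs; once it is in hand, the remainder is routine PSD algebra plus the one-line two-leader calculation. An alternative route that avoids writing this identity explicitly is to verify separately that $\myspan\{\one_n \otimes I_d, p^*\} \subseteq \Null(\L)$ (translation invariance of $\L$ together with Lemma~\ref{lemma_LP=0}) and that $\rank(\L) = dn - d - 1$ by transferring $\rank(R_B)$ from Theorem~\ref{theorem_IBR_NSCondition}(c) via an argument at the level of edge-contribution blocks; either route reduces the claim to the same final leader elimination step.
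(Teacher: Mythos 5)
Your proof is correct, but there is nothing in the paper to compare it against: the paper does not prove this lemma at all, it simply imports it from the cited reference \cite{zhao2015NetLocalization}. Taken on its own merits, your argument is sound and self-contained. The factorization $\L = R_B^T D R_B$ at $p^*$ with $D=\blkdiag(\|e_k^*\|^2 I_d)\succ 0$ is valid (it uses that the target formation's actual bearings coincide with the $g_{ij}^*$ by Definition~\ref{definition_targetFormation}(a), so the weights $P_{g_{ij}^*}$ in $\L$ match those appearing in $R_B(p^*)$), and it correctly yields $\L\succeq 0$ with $\Null(\L)=\Null(R_B(p^*))=\myspan\{\one_n\otimes I_d,\,p^*\}$ under infinitesimal bearing rigidity via Theorem~\ref{theorem_IBR_NSCondition}(d). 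The contradiction step is also right: $\L_{ff}\succeq 0$ as a principal submatrix, $x=[0^T,x_f^T]^T$ with $x^T\L x=0$ forces $\L x=0$ by positive semi-definiteness, and the two-leader condition with non-collocated leader positions kills the coefficients $a$ and $b$. The one place to be explicit if you write this up formally is the sign/orientation bookkeeping in $R_B=\dia{P_{g_k}/\|e_k\|}\bar H$ with $\bar H$ the (oriented) incidence matrix tensored with $I_d$, so that $\bar H^T\dia{P_{g_k}}\bar H$ reproduces exactly the block pattern of $\L$; but that is routine, and your fallback route via $\rank(\L)=dn-d-1$ reaches the same endpoint.
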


Since the target formation $p^*(t)$ satisfies the bearing constraints, it follows from Lemma~\ref{lemma_LP=0} that $\L_{ff}p_f^*(t)+\L_{f\ell}p_\ell(t)=0$.
Furthermore, since $\L_{ff}$ is positive definite by Lemma~\ref{lemma_LffNonsingular}, we have
\begin{align}\label{eq_expectedFollowerPosition}
{p}_f^*(t)=-\L_{ff}^{-1}\L_{f\ell}p_\ell(t).
\end{align}
Substituting \eqref{eq_error_follower} and \eqref{eq_expectedFollowerPosition} into \eqref{eq_controlLaw_maneuver_matrix} yields
\begin{align}\label{eq_delta_dynamics}
\dot{\delta}(t)&=-\kp\L_{ff}\delta(t) -\ki \xi(t) + \L_{ff}^{-1}\L_{f\ell}v_\ell^*(t), \nonumber \\
\dot{\xi}(t)&=\L_{ff}\delta(t).
\end{align}
The $\delta$- and $\xi$-dynamics given above can be written in a compact form as
\begin{align}\label{eq_deltaxi_dynamics}
\left[
  \begin{array}{c}
    \dot{\delta} \\
    \dot{\xi} \\
  \end{array}
\right]
=\underbrace{\left[
   \begin{array}{cc}
     -\kp \L_{ff} & -\ki I_{n_f} \\
     \L_{ff} & 0 \\
   \end{array}
 \right]}_{A}
 \left[
  \begin{array}{c}
    {\delta} \\
    {\xi} \\
  \end{array}
\right]
+\left[
  \begin{array}{c}
    \L_{ff}^{-1}\L_{f\ell} \\
    0 \\
  \end{array}
\right]v_\ell^*(t).
\end{align}

\begin{lemma}
The state matrix $A$ in \eqref{eq_deltaxi_dynamics} is Hurwitz for any $\kp, \ki>0$.
\end{lemma}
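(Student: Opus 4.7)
The plan is to show $A$ is Hurwitz by exploiting the positive definiteness of $\mathcal{L}_{ff}$ (from Lemma~\ref{lemma_LffNonsingular}) together with its symmetry, which is inherent in the bearing Laplacian since $P_{g_{ij}^*}=P_{g_{ji}^*}$ and projection matrices are symmetric. Because $\mathcal{L}_{ff}$ is real symmetric positive definite, there exists an orthogonal $Q$ such that $Q^T \mathcal{L}_{ff} Q = \Lambda = \mathrm{diag}(\mu_1,\dots,\mu_{dn_f})$ with every $\mu_i>0$.

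Next I would apply the similarity transformation $T=\mathrm{blkdiag}(Q^T, Q^T)$ to $A$, which yields
\begin{align*}
TAT^{-1}=\begin{bmatrix} -\kp \Lambda & -\ki I_{dn_f} \\ \Lambda & 0 \end{bmatrix}.
\end{align*}
Since $\Lambda$ is diagonal, a permutation of rows and columns (pairing the $i$th and $(dn_f+i)$th coordinates) brings this matrix into block-diagonal form with $2\times 2$ blocks
\begin{align*}
A_i=\begin{bmatrix} -\kp \mu_i & -\ki \\ \mu_i & 0 \end{bmatrix}, \quad i=1,\dots,dn_f.
\end{align*}
The eigenvalues of $A$ are therefore exactly the union of the eigenvalues of the $A_i$.

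It then suffices to verify that each $A_i$ is Hurwitz. Its characteristic polynomial is $\lambda^2+\kp \mu_i \lambda + \ki \mu_i$. Since $\kp,\ki,\mu_i$ are all strictly positive, both coefficients are positive, so by the Routh--Hurwitz criterion (equivalently, by Vieta's formulas: sum of roots $=-\kp\mu_i<0$ and product of roots $=\ki\mu_i>0$) both roots have strictly negative real part. This conclusion holds for every $i$, hence every eigenvalue of $A$ has negative real part, and $A$ is Hurwitz.

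There is no real obstacle here; the only subtle point is recognizing that $\mathcal{L}_{ff}$ is not only positive definite but symmetric, so that the spectral decomposition used to decouple $A$ into $2\times 2$ blocks is available. Once that is in hand the result reduces to the classical Routh--Hurwitz condition for a quadratic.
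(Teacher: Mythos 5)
Your proof is correct and follows essentially the same route as the paper: both exploit the symmetry and positive definiteness of $\L_{ff}$ to reduce the eigenvalue problem for $A$ to the scalar quadratic $\lambda^2+\kp\sigma\lambda+\ki\sigma=0$ with $\sigma>0$ an eigenvalue of $\L_{ff}$ (the paper via a block-determinant identity, you via spectral diagonalization and a permutation into $2\times2$ blocks). If anything, your Routh--Hurwitz finish is slightly more careful than the paper's closing claim that $\lambda<-\ki/\kp$, which as stated overlooks the case of complex-conjugate roots.
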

\begin{proof}
Suppose $\lambda$ is an eigenvalue of $A$. Then,
\begin{align*}
\det(\lambda I-A)
&=\det\left(\left[
   \begin{array}{cc}
     \lambda I+\kp \L_{ff} & \ki I \\
     -\L_{ff} & \lambda I \\
   \end{array}
 \right]\right) \\
&=\det\left(\lambda^2 I+\kp\lambda\L_{ff}+\ki \L_{ff}\right) \\
&=\det\left((\kp \lambda +\ki)\left(\frac{\lambda^2I}{\kp\lambda+\ki}+\L_{ff}\right)\right).
\end{align*}
As a result, $\det(\lambda I-A)=0$ implies either $\lambda=-\ki/\kp<0$ or
\begin{align*}
\frac{\lambda^2}{\kp\lambda+\ki}=-\sigma,
\end{align*}
where $\sigma$ is the eigenvalue of $\L_{ff}$.
Since $\L_{ff}$ is symmetric positive definite, we know $\sigma$ is positive and real.
The solution to the above equation can be easily calculated and it can be shown that $\lambda<-\ki/\kp<0$.
\end{proof}

Since $A$ is Hurwitz, system \eqref{eq_deltaxi_dynamics} is stable.
When $v_\ell^*(t)$ is time-varying, the followers are not able to perfectly track the leaders.
The fundamental reason is that the followers do not have access to the time-varying velocities of the leaders.
When $v_\ell^*(t)$ is constant, the tracking error will globally converge to zero due to the integral action.

\begin{theorem}[Global Convergence]\label{theorem_mainConvergenceResult}
When $v_\ell^*(t) = {\bf v}_{\ell}^*$ is constant, $\delta(t)$ and $\xi(t)$ exponentially and globally converges to $\delta(\infty)=0$ and $\xi(\infty)=\L_{ff}^{-1}\L_{f\ell}{\bf v}_{\ell}^*/\ki$, respectively.
As a result,
\begin{align*}
p_f(t)&\rightarrow p_f^*(t)=-\L_{ff}^{-1}\L_{f\ell}p_\ell(t), \\
\dot{p}_f(t)&\rightarrow v_f^*=-\L_{ff}^{-1}\L_{f\ell}{\bf v}_{\ell}^*,
\end{align*}
as $t\rightarrow \infty$.
\end{theorem}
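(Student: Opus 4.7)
The plan is to exploit the fact that, with $v_\ell^*(t) \equiv \mathbf{v}_\ell^*$ constant, equation \eqref{eq_deltaxi_dynamics} is a linear time-invariant system with a constant forcing term, driven by the Hurwitz matrix $A$ whose stability was just established. For any LTI system $\dot{z} = Az + Bu$ with $A$ Hurwitz and $u$ constant, the state converges exponentially and globally to the unique equilibrium $z_\infty = -A^{-1}Bu$, so the whole theorem reduces to identifying this equilibrium explicitly.

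First I would set the right-hand side of \eqref{eq_deltaxi_dynamics} to zero and read off the two algebraic conditions $-\kp \L_{ff}\delta_\infty - \ki \xi_\infty + \L_{ff}^{-1}\L_{f\ell}\mathbf{v}_\ell^* = 0$ and $\L_{ff}\delta_\infty = 0$. Since $\L_{ff}$ is positive definite by Lemma~\ref{lemma_LffNonsingular}, the second equation forces $\delta_\infty = 0$. Substituting back into the first and using $\ki > 0$ yields $\xi_\infty = \L_{ff}^{-1}\L_{f\ell}\mathbf{v}_\ell^* / \ki$, which is exactly the claimed limit. Because $A$ is Hurwitz, convergence to $(\delta_\infty,\xi_\infty)$ is exponential from any initial condition, giving the global exponential convergence claim.

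To obtain the two asymptotic statements about $p_f$ and $\dot{p}_f$, I would use the definition $\delta(t) = p_f(t) - p_f^*(t)$ together with the formula $p_f^*(t) = -\L_{ff}^{-1}\L_{f\ell}p_\ell(t)$ from \eqref{eq_expectedFollowerPosition}; since $\delta(t)\to 0$, we immediately get $p_f(t)\to p_f^*(t)$. Differentiating the identity $p_f(t) = p_f^*(t) + \delta(t)$ and using $\dot{p}_\ell = \mathbf{v}_\ell^*$ gives $\dot{p}_f(t) = -\L_{ff}^{-1}\L_{f\ell}\mathbf{v}_\ell^* + \dot{\delta}(t)$; substituting the computed limits $\delta\to 0$, $\xi\to \xi_\infty$ into the $\delta$-dynamics in \eqref{eq_delta_dynamics} shows $\dot{\delta}(t)\to 0$, so $\dot{p}_f(t)\to -\L_{ff}^{-1}\L_{f\ell}\mathbf{v}_\ell^* = v_f^*$.

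Nothing here looks like a genuine obstacle: the Hurwitz property of $A$ does the heavy lifting, and the only content is the bookkeeping of the equilibrium computation and the translation from $\delta,\xi$ back to $p_f,\dot p_f$. The mildly subtle point worth writing carefully is why $\dot\delta(t)\to 0$ — it is not automatic from $\delta(t)\to 0$ in general, but follows here because $\dot\delta$ is an affine function of the state $(\delta,\xi)$ which itself converges, so continuity of the linear map yields the desired limit.
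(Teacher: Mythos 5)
Your proposal is correct and follows essentially the same route as the paper: invoke the Hurwitz property of $A$ for global exponential convergence of the constant-input LTI system, solve $\dot{\delta}=\dot{\xi}=0$ for the equilibrium, and translate back to $p_f$ and $\dot{p}_f$. Your explicit justification that $\dot{\delta}(t)\to 0$ (as an affine function of the converging state) is a small point the paper leaves implicit, but it is not a different argument.
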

\begin{proof}
Since $A$ is Hurwitz and ${\bf v}_{\ell}^*$ is constant, $\delta(t)$ and $\xi(t)$ globally and exponentially converge to the steady state values of $\delta$ and $\xi$.
By letting $\dot{\delta}=0$ and $\dot{\xi}=0$, we can easily verify that the steady state values are $\delta(\infty)=0$ and $\xi(\infty)=\L_{ff}^{-1}\L_{f\ell}{\bf v}_{\ell}^*/\ki$.
It follows from $\delta(t)\rightarrow0$ and $\dot{\delta}(t)\rightarrow0$ that $p_f(t)\rightarrow p_f^*(t)$ and $\dot{p}_f(t)\rightarrow \dot{p}_f^*(t)=v_f^*$.
\end{proof}

By substituting $\xi(\infty)=\L_{ff}^{-1}\L_{f\ell}{\bf v}_{\ell}^*/\ki$ back into \eqref{eq_delta_dynamics}, it can be seen that the integral term $\xi$ finally eliminates the impact of the ``disturbance'' ${\bf v}_{\ell}^*$.

\subsection{Centroid and Scale Dynamics}

We next study how the leaders should move in order to realize the desired translational and scaling formation maneuvers under the action of the proposed control law.
In order to do that, we need to first define the centroid and scale of a formation and then analyze their dynamics.
Define the centroid $c(p^*(t))$ and the scale $s(p^*(t))$ of $p^*(t)$ as
\begin{align*}
c(p^*(t))&\triangleq\frac{1}{n}\sum_{i\in\V}p_i^*(t)=\frac{1}{n}(\one_n\otimes I_d)^T {p}^*(t),\\
s(p^*(t))&\triangleq\sqrt{\frac{1}{n}\sum_{i\in\V}\|p_i^*(t)-c(p^*)\|^2}\\
&=\frac{1}{\sqrt{n}}\|p^*(t)-\one_n\otimes c(p^*)\|.
\end{align*}

\newcommand{\vc}{v_{\mathrm{c}}}

\begin{proposition}[Translational Maneuvering]\label{proposition_translationManeuver}
If the velocity of each leader is constant and satisfies
$$v_i^*={\bf v}_c, \quad i\in\V_\ell,$$
where ${\bf v}_c\in\R^d$ is a common velocity, then
\begin{align*}
\dot{c}(p^*(t))\equiv{\bf v}_c, \quad \dot{s}(p^*(t))\equiv0,
\end{align*}
which means the target formation $p^*(t)$ moves at the common velocity while the scale is fixed.
\end{proposition}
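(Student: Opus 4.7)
The plan is to compute $\dot p^*(t)$ explicitly under the hypothesis $v_i^* = \mathbf{v}_c$ for all leaders, show that every agent in the target formation moves with the same velocity $\mathbf{v}_c$, and then read off the centroid and scale derivatives directly.

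The first step is to establish the translational-invariance identity $\L(\one_n \otimes \mathbf{v}_c)=0$. This is immediate from the definition of $\L$: the $i$th block row of $\L(\one_n \otimes \mathbf{v}_c)$ equals $\sum_{j\in\N_i} P_{g_{ij}^*}(\mathbf{v}_c - \mathbf{v}_c)=0$. Partitioning this identity according to leaders and followers yields
\begin{align*}
\L_{f\ell}(\one_{n_\ell}\otimes \mathbf{v}_c) + \L_{ff}(\one_{n_f}\otimes \mathbf{v}_c)=0,
\end{align*}
and since $\L_{ff}$ is invertible by Lemma~\ref{lemma_LffNonsingular}, we get $\one_{n_f}\otimes \mathbf{v}_c = -\L_{ff}^{-1}\L_{f\ell}(\one_{n_\ell}\otimes \mathbf{v}_c)$.

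The second step uses the formula \eqref{eq_expectedFollowerPosition} for the follower target positions. Differentiating in time and noting that $v_\ell^*(t) = \one_{n_\ell}\otimes \mathbf{v}_c$ by hypothesis gives
\begin{align*}
\dot p_f^*(t) = -\L_{ff}^{-1}\L_{f\ell}\,v_\ell^*(t) = -\L_{ff}^{-1}\L_{f\ell}(\one_{n_\ell}\otimes \mathbf{v}_c)=\one_{n_f}\otimes \mathbf{v}_c,
\end{align*}
by the identity derived above. Since $\dot p_\ell(t) = \one_{n_\ell}\otimes \mathbf{v}_c$ as well, we conclude $\dot p^*(t)=\one_n\otimes \mathbf{v}_c$, i.e., every agent in the target formation moves with the common velocity $\mathbf{v}_c$.

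The third step is now routine. For the centroid, $\dot c(p^*(t)) = \frac{1}{n}(\one_n\otimes I_d)^T\dot p^*(t) = \frac{1}{n}(\one_n^T\one_n)\mathbf{v}_c = \mathbf{v}_c$. For the scale, set $q(t)\triangleq p^*(t)-\one_n\otimes c(p^*(t))$ so that $s(p^*(t)) = \|q(t)\|/\sqrt{n}$; then $\dot q(t) = \dot p^*(t) - \one_n\otimes \dot c(p^*(t)) = \one_n\otimes \mathbf{v}_c - \one_n\otimes \mathbf{v}_c = 0$, so $\|q(t)\|$ is constant and $\dot s(p^*(t))\equiv 0$. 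There is no real obstacle; the only non-mechanical ingredient is recognizing that pure translations lie in the kernel of $\L$, which reduces the whole proposition to the block identity in the first paragraph.
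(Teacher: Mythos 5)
Your proof is correct and follows essentially the same route as the paper's: both establish $\L(\one_n\otimes\mathbf{v}_c)=0$, use the block partition and the invertibility of $\L_{ff}$ to conclude $\dot p^*=\one_n\otimes\mathbf{v}_c$, and then read off the centroid and scale derivatives. Your handling of the scale via $\dot q\equiv 0$ is a slightly cleaner way to sidestep the chain-rule term from the time-varying centroid, but it is not a substantively different argument.
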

\begin{proof}
Since $v_i^*(t)={\bf v}_c, \forall i\in\V_\ell$, we have ${v}_{\ell}^*(t)=\one_{n_\ell}\otimes {\bf v}_c$.
It follows from $\L(\one_n\otimes {\bf v}_c)=0$ that $\L_{f\ell}(\one_{n_\ell}\otimes {\bf v}_c)+\L_{ff}(\one_{n_f}\otimes {\bf v}_c)=0$.
As a result, $$v_f^*=-\L_{ff}^{-1}\L_{f\ell}{v}_{\ell}^*=\one_{n_f}\otimes {\bf v}_c.$$
Consequently, $\dot{p}^*=[({v}_{\ell}^*)^T ,(v_f^*)^T ]^T =\one_n\otimes {\bf v}_c$.
Substituting $\dot{p}^*$ into $\dot{c}(p^*)$ and $\dot{s}(p^*)$ gives
\begin{align*}
\dot{c}(p^*)&=\frac{1}{n}(\one_n\otimes I_d)^T \dot{p}^*=\frac{1}{n}(\one_n\otimes I_d)^T (\one_n\otimes {\bf v}_c)={\bf v}_c, \\
\dot{s}(p^*)&=\frac{1}{\sqrt{n}}\frac{(p-\one_n\otimes c(p^*))^T }{\|p-\one_n\otimes c(p^*)\|}\dot{p}^*\\
&=\frac{1}{\sqrt{n}}\frac{(p-\one_n\otimes c(p^*))^T }{\|p-\one_n\otimes c(p^*)\|}(\one_n\otimes {\bf v}_c)=0.
\end{align*}
\end{proof}

\begin{proposition}[Scaling Maneuvering]\label{propostion_scaleManeuvering}
If the velocity of each leader is constant and satisfies
\begin{align}\label{eq_velocityForScale}
v_i^*=\alpha_i \frac{p_i^*(t)-c(p^*)}{\|p_i^*(t)-c(p^*)\|},\quad i\in\V_\ell,
\end{align}
where $\alpha_i\in\R$ is constant and satisfies $\alpha_i/\|p_i^*(t)-c(p^*)\|=\alpha_j/\|p_j^*(t)-c(p^*)\|$ for all $i,j\in\V_\ell$, then
\begin{align*}
\dot{c}(p^*(t))\equiv0, \quad \dot{s}(p^*(t))\equiv \sgn(\alpha_i)\sqrt{\frac{1}{n}\sum_{i\in\V}\alpha_i^2},
\end{align*}
which means the scale of the target formation $p^*(t)$ is continuously varying while the centroid is fixed.
\end{proposition}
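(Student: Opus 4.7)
The plan is to mirror the structure of the proof of Proposition~\ref{proposition_translationManeuver}: first convert the radial leader velocity prescription into a constant-vector prescription, and then differentiate a shape decomposition to read off $\dot c(p^*)$ and $\dot s(p^*)$. The new ingredient compared with the translational case is the exploitation of bearing rigidity to pull out a time-invariant shape.

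First I would invoke Assumption~\ref{assumption_targetFormationUniqueExist} and Theorem~\ref{theorem_IBR_NSCondition}(b) to parametrize the target formation as $p_i^*(t) = c(p^*(t)) + s(p^*(t))\bar{q}_i$ for all $i\in\V$, where $\bar{q}_i\in\R^d$ is a time-invariant shape vector normalized so that $\sum_{i\in\V}\bar{q}_i = 0$ and $\sum_{i\in\V}\|\bar{q}_i\|^2 = n$ (making the $c$ and $s$ of the parametrization coincide with the centroid and scale defined earlier). Under this decomposition the radial unit vectors $(p_i^*-c(p^*))/\|p_i^*-c(p^*)\| = \bar{q}_i/\|\bar{q}_i\|$ become \emph{constant in time}, and the compatibility condition on the $\alpha_i$ reduces to $\alpha_i/\|\bar{q}_i\| = \beta$ for a scalar $\beta$ common to all leaders. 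The leader prescription therefore collapses to $v_i^* = \beta\bar{q}_i$, which is constant, so Theorem~\ref{theorem_mainConvergenceResult} applies.

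The second step is derivative matching. Differentiating the decomposition yields $v_i^*(t) = \dot c(p^*(t)) + \dot s(p^*(t))\bar{q}_i$, and equating with $\beta\bar{q}_i$ for each leader $i$ gives $\dot c(p^*(t)) = (\beta - \dot s(p^*(t)))\bar{q}_i$. Because no two agents are collocated, the vectors $\bar{q}_i$ are distinct for distinct indices; invoking the ``at least two leaders'' part of Assumption~\ref{assumption_targetFormationUniqueExist} I can pick two such leaders and conclude $\beta - \dot s(p^*(t)) = 0$, hence $\dot s(p^*(t)) = \beta$ and $\dot c(p^*(t)) = 0$.

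The main obstacle is reconciling the scalar $\beta$ with the displayed formula $\sgn(\alpha_i)\sqrt{\frac{1}{n}\sum_{i\in\V}\alpha_i^2}$, whose sum runs over all agents rather than just the leaders. To close this gap I would apply Theorem~\ref{theorem_mainConvergenceResult} together with Lemma~\ref{lemma_LP=0} to the shape vector $\bar{q}$: since $\bar{q}$ differs from $p^*$ only by a translation and a scaling, $\L\bar{q} = 0$ and hence $\L_{ff}\bar{q}_f = -\L_{f\ell}\bar{q}_\ell$, so $v_f^* = -\L_{ff}^{-1}\L_{f\ell}v_\ell^* = \beta\bar{q}_f$. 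The identity $v_i^* = \beta\bar{q}_i$, and thus $\alpha_i = \beta\|\bar{q}_i\|$, therefore extends from leaders to all agents; then $\sum_{i\in\V}\alpha_i^2 = \beta^2\sum_i\|\bar{q}_i\|^2 = \beta^2 n$, and reading off the sign of any nonzero $\alpha_i$ delivers the claimed expression for $\dot s(p^*)$.
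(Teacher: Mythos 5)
Your proof is correct, and while it shares the paper's central mechanism---propagating the leaders' radial velocity to the followers through $\L(p^*-\one_n\otimes c(p^*))=0$ and $v_f^*=-\L_{ff}^{-1}\L_{f\ell}v_\ell^*$---it extracts $\dot c$ and $\dot s$ by a genuinely different device. The paper writes $v_i^*=k(t)(p_i^*-c(p^*))$ with the \emph{time-varying} gain $k(t)=\alpha_i/\|p_i^*(t)-c(p^*)\|$, concludes $\dot p^*=k(t)(p^*-\one_n\otimes c(p^*))$, and then differentiates the definitions of $c$ and $s$ directly, the orthogonality $(\one_n\otimes I_d)^T(p^*-\one_n\otimes c(p^*))=0$ giving $\dot c=0$ and a norm computation giving $\dot s$. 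You instead freeze the shape via the normalized decomposition $p_i^*=c+s\bar q_i$---which costs you an extra appeal to Theorem~\ref{theorem_IBR_NSCondition}(b) to justify that $\bar q$ is time-invariant, something the paper's proof never needs---and then read off $\dot c=0$, $\dot s=\beta$ by matching $\dot c+\dot s\,\bar q_i=\beta\bar q_i$ at two non-collocated leaders. Your route buys a constant $\beta$ in place of the paper's time-varying $k(t)$ (making the constancy of $\dot s$ transparent without the final cancellation $\alpha_i^2/k^2(t)$), and it makes explicit the extension of $\alpha_i=\beta\|\bar q_i\|$ from the leaders to all agents, which the paper uses silently when it sums $\alpha_i^2$ over all of $\V$ even though the hypothesis only defines $\alpha_i$ for $i\in\V_\ell$. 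The paper's route is slightly more economical in its hypotheses, needing only the algebraic identity $\L(p^*-\one_n\otimes c(p^*))=0$ rather than the uniqueness-up-to-translation-and-scaling characterization.
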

\begin{proof}
Since $\alpha_i/\|p_i^*(t)-c(p^*)\|=\alpha_j/\|p_j^*(t)-c(p^*)\|$ for all $i,j\in\V_\ell$, there exists $k(t)$ such that
$$\alpha_i=k(t)\|p_i^*(t)-c(p^*)\|$$ and hence $v_i^*=k(t)(p_i^*(t)-c(p^*))$ for all $i\in\V_\ell$.
Then,  $v_{\ell}^*$ can be expressed as $v_{\ell}^*=k(t) (p^*-\one_{n_\ell}\otimes c(p^*))$.
It follows from $\L(p^*-\one_n\otimes c(p^*))=0$ that $\L_{f\ell}(p_\ell^*-\one_{n_\ell}\otimes c(p^*))+\L_{ff}(p_f^*-\one_{n_f}\otimes c(p^*))=0$.
As a result, $$v_f^*=-\L_{ff}^{-1}\L_{f\ell}v_{\ell}^*=k(t)(p_f^*-\one_{n_f}\otimes c(p^*)).$$
Consequently, $\dot{p}^*=[(v_{\ell}^*)^T ,(v_f^*)^T ]^T =k(t)(p^*-\one_{n}\otimes c(p^*))$.
Substituting $\dot{p}^*$ into $\dot{c}(p^*)$ and $\dot{s}(p^*)$ gives
\begin{align*}
\dot{c}(p^*)&=\frac{1}{n}(\one_n\otimes I_d)^T \dot{p}^*\\
&=\frac{1}{n}(\one_n\otimes I_d)^T (p^*-\one_{n}\otimes c(p^*))k(t)=0,
\end{align*}
and
\begin{align*}
\dot{s}(p^*)&=\frac{1}{\sqrt{n}}\frac{(p^*-\one_n\otimes c(p^*))^T }{\|p^*-\one_n\otimes
c(p^*)\|}\dot{p}^*\\
&=k(t)\frac{1}{\sqrt{n}}\frac{(p^*-\one_n\otimes c(p^*))^T }{\|p^*-\one_n\otimes c(p^*)\|}(p^*-\one_{n}\otimes c(p^*)) \\
&=k(t)\frac{1}{\sqrt{n}}\|p^*-\one_n\otimes c(p^*)\|\\
&=k(t)\sqrt{\frac{1}{n}\sum_{i\in\V}\|p_i^*(t)-c(p^*)\|^2}\\
&=k(t)\sqrt{\frac{1}{n}\sum_{i\in\V}\frac{\alpha_i^2}{k^2(t)}}
=\sgn(\alpha_i)\sqrt{\frac{1}{n}\sum_{i\in\V}\alpha_i^2}.
\end{align*}
\end{proof}

In Proposition~\ref{propostion_scaleManeuvering}, if $\alpha_i>0$, the velocity of each agent is pointing from the fixed centroid to the agent and hence the formation scale dilates; otherwise, if $\alpha_i<0$ the formation scale shrinks.
It should be noted that $v_i^*$ given in \eqref{eq_velocityForScale} is constant though $p_i^*(t)$ is time-varying.

When the velocity of the leaders is a linear combination of a translational and a scaling term, both the centroid and the scale of the formation will be time-varying.
Denote $c(p(t))$ and $s(p(t))$ as the centroid and the scale of the real formation $p(t)$, respectively.
By combining Theorem~\ref{theorem_mainConvergenceResult}, Propositions~\ref{proposition_translationManeuver}, and \ref{propostion_scaleManeuvering}, we obtain the following theorem.

\begin{theorem}\label{theorem_bothTransAndScale}
Under Assumption~\ref{assumption_targetFormationUniqueExist}, if the velocity of each leader is constant and there exist constant ${\bf v}_c\in\R^d$ and $\alpha_i\in\R$ such that $v_i^*$ can be decomposed to
\begin{align}\label{eq_vistart_transScale}
v_i^*={\bf v}_c + \alpha_i \frac{p_i^*(t)-c(p^*)}{\|p_i^*(t)-c(p^*)\|}, \quad \forall i\in\V_\ell,
\end{align}
where $\alpha_i$ satisfies $\alpha_i/\|p_i^*(t)-c(p^*)\|=\alpha_j/\|p_j^*(t)-c(p^*)\|$ for all $i,j\in\V_\ell$.
As a result,
\begin{align*}
\dot{c}(p^*(t))\equiv{\bf v}_c, \quad \dot{s}(p^*(t))\equiv\sgn(\alpha_i)\sqrt{\frac{1}{n}\sum_{i\in\V}\alpha_i^2}.
\end{align*}
Furthermore, $\dot{c}(p(t))$ and $\dot{s}(p(t))$ globally converge to $\dot{c}(p^*(t))$ and $\dot{s}(p^*(t))$, respectively.
\end{theorem}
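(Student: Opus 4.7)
The plan is to assemble this theorem as a straightforward superposition of Proposition~\ref{proposition_translationManeuver}, Proposition~\ref{propostion_scaleManeuvering}, and Theorem~\ref{theorem_mainConvergenceResult}, exploiting the linearity of the map $v_\ell^* \mapsto v_f^* = -\L_{ff}^{-1}\L_{f\ell} v_\ell^*$. First I would verify that each $v_i^*$ defined by \eqref{eq_vistart_transScale} is genuinely time-invariant, even though the expression contains $p_i^*(t) - c(p^*)$: as already observed after Proposition~\ref{propostion_scaleManeuvering}, the ratio condition on the $\alpha_i$ forces the scaling component to be constant in time, and the translational component ${\bf v}_c$ is constant by hypothesis. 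Hence the constancy hypothesis of Theorem~\ref{theorem_mainConvergenceResult} is satisfied and the control law applies.

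Next I would decompose $v_\ell^*$ as $v_\ell^* = \one_{n_\ell}\otimes {\bf v}_c + k(t)\bigl(p_\ell^*(t) - \one_{n_\ell}\otimes c(p^*)\bigr)$, reusing the notation $k(t)$ from the proof of Proposition~\ref{propostion_scaleManeuvering}. By linearity of $v_f^* = -\L_{ff}^{-1}\L_{f\ell} v_\ell^*$ and by the two identities $\L_{f\ell}(\one_{n_\ell}\otimes {\bf v}_c) + \L_{ff}(\one_{n_f}\otimes {\bf v}_c) = 0$ and $\L_{f\ell}(p_\ell^* - \one_{n_\ell}\otimes c(p^*)) + \L_{ff}(p_f^* - \one_{n_f}\otimes c(p^*)) = 0$ established inside those two propositions, I obtain the closed form
\begin{align*}
\dot{p}^*(t) = \one_n \otimes {\bf v}_c + k(t)\bigl(p^*(t) - \one_n \otimes c(p^*)\bigr).
\end{align*}
Substituting this into the definitions of $\dot c(p^*)$ and $\dot s(p^*)$ is then mechanical: the $\dot c$ formula is linear in $\dot p^*$, so superposing the two propositions' calculations gives $\dot c(p^*) \equiv {\bf v}_c$. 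For $\dot s(p^*)$ the cross term $(p^* - \one_n\otimes c(p^*))^T(\one_n\otimes {\bf v}_c)$ vanishes by the centering identity $\sum_{i\in\V}(p_i^* - c(p^*)) = 0$, leaving only the scaling contribution computed in Proposition~\ref{propostion_scaleManeuvering}, namely $\sgn(\alpha_i)\sqrt{\tfrac{1}{n}\sum_{i\in\V}\alpha_i^2}$.

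Finally, for the convergence of the real formation's centroid and scale velocities to those of the target formation, I would invoke Theorem~\ref{theorem_mainConvergenceResult}: since $v_\ell^*$ is constant, the theorem guarantees $\delta(t)\to 0$ and, through $\dot\delta \to 0$ in \eqref{eq_delta_dynamics}, also $\dot{p}_f(t)\to v_f^*=\dot p_f^*(t)$. The centroid velocity is a linear functional of $\dot p$, so $\dot c(p(t))\to \dot c(p^*(t))$ is immediate. For the scale velocity, $\dot s$ is continuous in $(p,\dot p)$ away from the degenerate set $\{p=\one_n\otimes c(p)\}$, which is excluded by Assumption~\ref{assumption_targetFormationUniqueExist} (an infinitesimally bearing rigid formation is non-degenerate), so the convergence transfers.

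The only step that requires any care is the superposition argument and the orthogonality cancellation in $\dot s(p^*)$; everything else is a direct appeal to the three results already proved. I do not expect a genuine technical obstacle, as the theorem is essentially a corollary packaging the previous three results under the combined translational--scaling input.
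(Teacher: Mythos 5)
Your proposal is correct and follows essentially the same route as the paper: the result is obtained by superposing Propositions~\ref{proposition_translationManeuver} and~\ref{propostion_scaleManeuvering} and invoking Theorem~\ref{theorem_mainConvergenceResult}, exactly as you do (the paper merely states that the rest ``directly follows,'' whereas you spell out the linearity of $v_\ell^*\mapsto -\L_{ff}^{-1}\L_{f\ell}v_\ell^*$ and the cross-term cancellation in $\dot{s}(p^*)$). The only step the paper takes that you omit is showing the decomposition \eqref{eq_vistart_transScale} is not really an extra hypothesis: since $\L p^*(t)\equiv 0$ with $\L$ constant, differentiation gives $\L v^*=0$, and infinitesimal bearing rigidity yields $\Null(\L)=\myspan\{\one_n\otimes I_d,\,p^*\}$, so any constant $v^*$ automatically admits the form \eqref{eq_vistart_transScale}; this omission is harmless for the theorem as literally stated, where the decomposition is assumed.
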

\begin{proof}
Since the target formation $p^*(t)$ satisfies the bearing constraints, it follows from Lemma~\ref{lemma_LP=0} that $\L p^*(t)=0$, which implies $\L v^*(t)=0$.
Since the bearing constraints imply infinitesimal bearing rigidity, then $\Null(\L)=\myspan\{\one_{n}\otimes I_d, p^*\}$.
As a result, if $v^*$ is constant, it can always be expressed as a linear combination of $\one_n\otimes I_d$ and $p^*-\one_n\otimes c(p^*)$.
Therefore, equation \eqref{eq_vistart_transScale} holds.
The rest of the theorem directly follows from Theorem~\ref{theorem_mainConvergenceResult}, Propositions~\ref{proposition_translationManeuver}, and \ref{propostion_scaleManeuvering}.
\end{proof}

Two remarks on Theorem~\ref{theorem_bothTransAndScale} are given below.
Firstly, in order to achieve the desired translational or scaling maneuvering, the leaders must collaborate and share some global information like a common velocity and the centroid of the formation.
Secondly, $v^*_i$ given in \eqref{eq_vistart_transScale} is constant because the unit vector $({p_i^*(t)-c(p^*)})/{\|p_i^*(t)-c(p^*)\|}$ is invariant to translational and scaling maneuvers though $p^*_i(t)$ is not.

\section{Simulation Examples}

Figure~\ref{fig_sim_2DObstacle} and Figure~\ref{fig_sim_3DObstacle} give two comprehensive examples to demonstrate that the proposed control law can control the formation scale and translation to perform sophisticated maneuvers such as moving the formation through a narrow passage while maintaining the desired formation shape.
In each example, the target formation is infinitesimally bearing rigid and has two leaders.
The leader velocities are piecewise constant.
In the two examples, the leaders have a common forward velocity such that the entire formation always moves forward.
In order to adjust the formation scale, the leaders will have an additional velocity to move toward or away from each other to adjust the distance between them.
It is also worthwhile mentioning that although the tracking errors converge to zero only when the leader velocities are constant, the proposed control law can still give satisfactory performance in the cases where the leader velocities are piecewise continuous.

\begin{figure}
  \centering
  \subfloat[Formation trajectory (the leaders are marked by red and blue circles, and the followers by green circles).]{\includegraphics[width=0.9\linewidth]{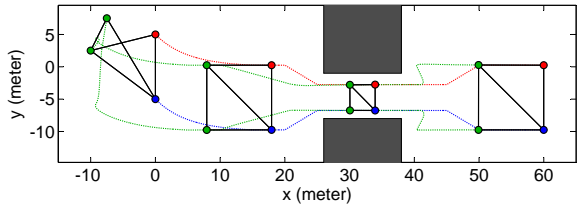}}\\
  \subfloat[Leader velocity commands (both leaders, red and blue, have the same $x$-component input).]{\includegraphics[width=0.9\linewidth]{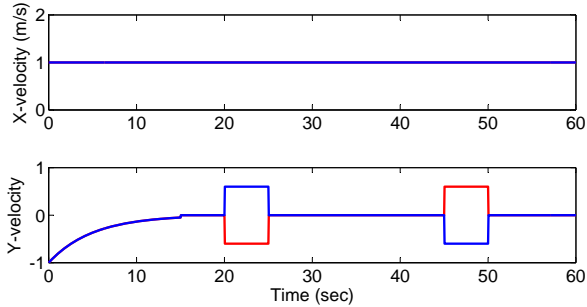}}\\
  \subfloat[Bearing error: $\sum_{(i,j)\in\E}\|g_{ij}(t)-g_{ij}^*\|$]{\includegraphics[width=0.9\linewidth]{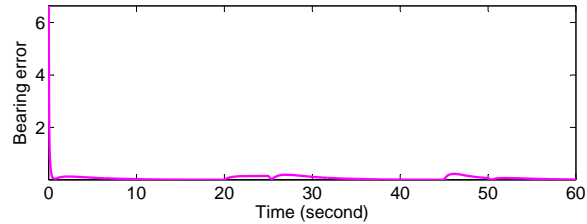}}
  \caption{A 2D example to demonstrate the formation maneuver through a narrow passage in the plane.
  }
  \label{fig_sim_2DObstacle}
\end{figure}

\begin{figure}
  \centering
  \subfloat[Formation trajectory (the leaders are marked by red and blue circles, and the followers by green circles).]{\includegraphics[width=0.9\linewidth]{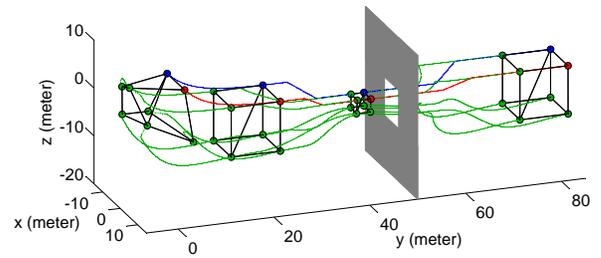}}\\
  \subfloat[Leader velocity commands (both leaders, red and blue, have the same y- and z-component inputs).]{\includegraphics[width=0.9\linewidth]{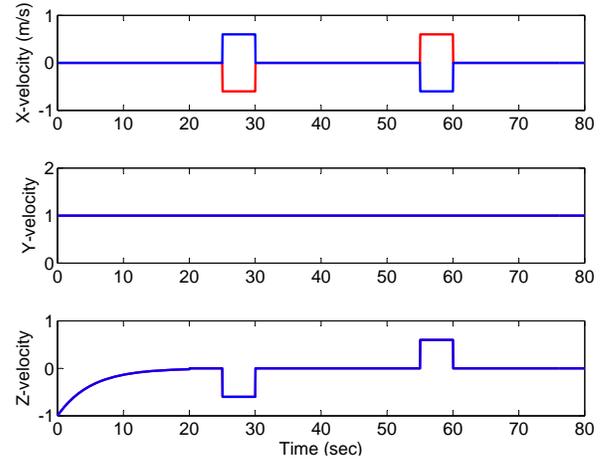}}\\
  \subfloat[Bearing error: $\sum_{(i,j)\in\E}\|g_{ij}(t)-g_{ij}^*\|$]{\includegraphics[width=0.9\linewidth]{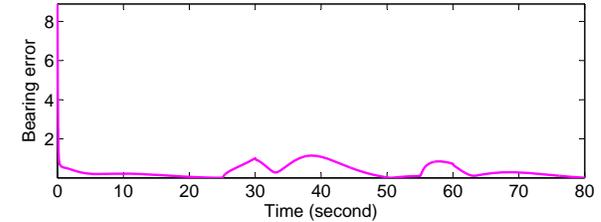}}
  \caption{A 3D example to demonstrate the formation maneuver through a narrow passage.}
  \label{fig_sim_3DObstacle}
\end{figure}

\section{Conclusion}
This paper studied the problem of bearing-based formation maneuver control in arbitrary dimensions.
A distributed PI control law has been proposed and its global stability has been proved.
It was shown that under the proposed control law the leaders can collaborate to control the translation and scale of the formation.
In this paper, we only considered the case where the underlying sensing graph is undirected.
Directed cases will be studied in the future.

{\small
\section*{Acknowledgements}
The work presented here has been supported by the Israel Science Foundation (grant No. 1490/13).
}
\bibliography{myOwnPub,zsyReferenceAll} 
\bibliographystyle{ieeetr}

\end{document}